\newtheorem{theorem}{Theorem}
\newtheorem{lemma}{Lemma}
\newtheorem{remark}{Remark}
\tikzstyle{dot}=[circle,draw=gray!90,fill=gray!20,thick,inner
\date{}
\begin{document}
\title{Beyond the Cut-Set Bound: Uncertainty Computations in Network Coding with Correlated Sources}
\author{Amin Aminzadeh Gohari$^*$, Shenghao Yang$^{**}$, Sidharth Jaggi$^{**}$\\\small
$^{*}$Department of Electrical Engineering, Sharif University of Technology, Iran\\ $^{**}$Institute of Network Coding, Department of Information Engineering, Chinese University of Hong Kong}

\maketitle
\begin{abstract} Cut-set bounds on achievable rates for network communication protocols are not in general tight. In this paper we introduce a new technique for proving converses for the problem of transmission of correlated sources in networks, that results in bounds that are tighter than the corresponding cut-set bounds. We also define the concept of ``uncertainty region" which might be of independent interest. We provide a full characterization of this region for the case of two correlated random variables. The bounding technique works as follows: on one hand we show that if the communication problem is solvable, the uncertainty of certain random variables in the network with respect to imaginary parties that have partial knowledge of the sources must satisfy some constraints that depend on the network architecture. On the other hand, the same uncertainties have to satisfy constraints that only depend on the joint distribution of the sources. Matching these two leads to restrictions on the statistical joint distribution of the sources in communication problems that are solvable over a given network architecture.
%one of the main results is the tight characterization of the uncertainty region.
\end{abstract}

\section{Introduction}
Consider a directed network with a source $s$ and two sinks
$t_1$ and $t_2$.\footnote{To convey the basic ideas in the simplest way, throughout this paper we assume that there are two sources. Generalization to more than two sources (sinks) is also possible.} Suppose that the source observes i.i.d.\ copies of random variables
$X$, $Y$ jointly distributed according to $p(x,y)$. Sink $t_1$ is interested in the i.i.d. copies of $X$, while
sink $t_2$ is interested in the i.i.d. copies of $Y$. We consider the problem of
reliable transmission to fulfill the demands of both
sink nodes with probability converging to one as the number of i.i.d. observations of $X$, $Y$ grows without bound.
%Consider a directed network with two source $s_1$ and $s_2$ and two sinks
%$t_1$ and $t_2$. The source $s_1$ observing i.i.d. copies of $X$ and a source $s_2$ observing i.i.d. copies of $Y$. $X$ and $Y$ are jointly distributed according to $p(x,y)$. Sink $t_1$ is interested in the i.i.d. copies of $X$, while
%sink $t_2$ is interested in the i.i.d. copies of $Y$. We consider the problem of
%reliable transmission to fulfill the demands of both
%sink nodes with probability converging to one as the number of i.i.d. observations of $X$, $Y$ converges to infinity.

The cut-set bound says that if the demands of both
sinks can be fulfilled, each of the cuts that separate $s$ from $t_1$ must have capacity at least $H(X)$, each of the cuts that separate $s$ from $t_2$ must have capacity at least $H(Y)$ and each of the cuts that separate $s$ from $(t_1,t_2)$ must have capacity at least $H(X,Y)$. The cut-set bound is known to be tight when $X=(M_0,M_1)$ and $Y=(M_0,M_2)$ for some mutually independent random variables $M_0$, $M_1$, $M_2$ \cite{Ngai2004,Erez2003}.
Another case is when $X$ and $Y$ are ``linearly correlated" in the sense that one can express $X$ and $Y$ as $X = A U^m$ and $Y = B U^m$ for some random vector $U^m$, and matrices $A$
and $B$ all taking values in a given field. Without loss of generality one can assume that the rows of $A$ and $B$ are
linearly independent. By applying suitably chosen invertible
linear transformations $T_1$ and $T_2$, we can write
\begin{align*}
  T_1X & =
  \begin{bmatrix}
    A_0 \\ A_1
  \end{bmatrix}
  U^m \\
  T_2Y & =
  \begin{bmatrix}
    A_0 \\ B_1
  \end{bmatrix}
  U^m ,
\end{align*}
where the rows of $A_0$, $A_1$ and $B_1$ are linearly independent. Because the
linear transformations $T_1$ and $T_2$ are invertible, the communication task
is to transmit the common message $A_0U^m$ to both the sinks, and the private messages $A_1U^m$ and $B_1U^m$ to the two sinks. Clearly this problem reduces to the one mentioned above if $A_0U^m$, $A_1U^m$, $B_1U^m$ are mutually independent. Therefore the cut-set bound  is also tight in such cases.

However, in general when the joint distribution of $X$ and $Y$ is arbitrary the cut-set bound is not always tight. To go beyond the cut-set bound, we devise a new technique for proving converses for the problem of transmission of correlated sources over networks. We provide an example for which the cut-set bound is not tight, but the new converse is tight. Nonetheless the problem of finding joint distribution of the sources in communication problems that are solvable over a given network remains an open problem. One can refer to the several papers written on this topic for treatments of special cases of this problem (see for instance \cite{HoMedradEffrosKoetter}%, \cite{BarrosServetto}, \cite{Ramamoorthy}, \cite{Effros},
-\cite{Coleman}). Some of these works discuss different settings in which separated source coding and network coding becomes either optimal or suboptimal.

At the heart of our technique lies the concept of ``uncertainty region" and how we relate it to networks. We define the uncertainty region as the set of all possible uncertainty vectors where each of these vectors are trying to capture the uncertainty of a given random variable from the perspective of different observers who have access to distinct but dependent sources. More precisely, given an arbitrary random variable $K$, a vector formed by listing the uncertainty left in $K$ when conditioned on different subsets of i.i.d.\ copies $\{X^n, Y^n\}$, i.e. $[\frac{1}{n}H(K),~\frac{1}{n}H(K|X^n),~\frac{1}{n}H(K|Y^n),~ \frac{1}{n}H(K|X^n,Y^n)]$, is called an uncertainty vector. Since the statistical dependence between the sources affects the uncertainty region in a crucial way, our discussion of correlated sources here is not an straightforward extension of the case of independent sources. Our technique also differs from those developed by Kramer et al.\ \cite{KramerSavari}, Harvey, et al.\ \cite{HarveyKleinbergLehman} and Thakor et al.\ \cite{ThakorGrantChan}, all of which concern transmission of independent sources over networks. 

The rest of the paper is organized as follows. In Section
\ref{Section:Motivation}, we motivates our new technique. Section \ref{Section:UncertaintyRegion}
contains one of the main results of this article, a complete characterization of the uncertainty region. Section
\ref{Section:Proofs} includes the proofs.
\section{Motivation}
\label{Section:Motivation}
This section motivates our technique which is based on uncertainty computations. For the ease of exposition and to convey the main ideas, discussions in this section will be \emph{quite intuitive and not rigorous}. A precise discussion will be provided later.

Let us begin with the well-known butterfly network shown in Figure \ref{fig:f1}. Assume that the source is observing $n$ i.i.d.\ repetitions of the correlated binary sources $(X,Y)$. Thus the source has a length-$n$ vector $X^n$ and the length-$n$ vector $Y^n$. The first sink is interested in recovering the $n$ i.i.d.\ repetitions of $X$ whereas the second sink is interested in recovering the $n$ i.i.d.\ repetitions of $Y$. Probabilities of error at both sinks are required to converge to zero as the number of i.i.d.\ observations of $X$, $Y$ grow without bound.
 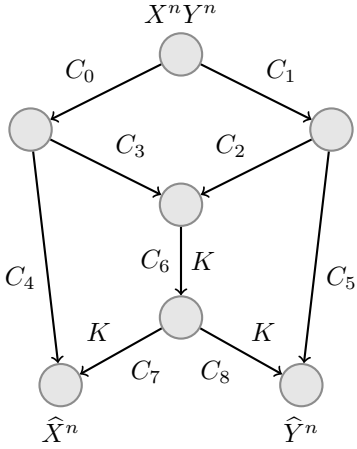
\begin{figure}
  \centering
  \begin{tikzpicture}[scale=1]
	\node[dot, label=above:$X^nY^n$] (s) at (0,0) {};
	\node[dot] (a) at (-2,-1) {}
    edge [<-,thick] node[auto] {$C_0$} (s);
	\node[dot] (b) at (2,-1) {}
    edge [<-,thick]  node[auto,swap] {$C_1$}  (s);
	\node[dot] (c) at (0,-2) {}
    edge [<-,thick] node[auto] {$C_2$} (b)
    edge [<-,thick] node[auto,swap] {$C_3$} (a);
	\node[dot] (d) at (0,-3.5) {}
    edge [<-,thick] node[auto] {$C_6$} node[auto,swap] {$K$} (c);
	\node[dot, label=below:$\widehat{X}^n$] (t) at (-1.6,-4.4) {}
    edge [<-,thick] node[auto,swap] {$C_7$} node[auto] {$K$} (d)
    edge [<-,thick] node[auto] {} node[auto] {$C_4$} (a);
	\node[dot, label=below:$\widehat{Y}^n$] (u) at (1.6,-4.4) {}
    edge [<-,thick] node[auto] {$C_8$} node[auto, swap] {$K$} (d)
    edge [<-,thick] node[auto,swap] {$C_5$} (b);
  \end{tikzpicture}
  \caption{Transmission of correlated sources over a butterfly network. The capacity of edge $i$ is $C_i$ as labeled. Assume $C_6=C_7=C_8$. $K$ is the message on edge $6$.}
  \label{fig:f1}
\end{figure}
For the sake of simplicity we restrict ourselves to networks such that the cut towards the first receiver across edges $4$ and $6$, and the cut towards the second receiver across edges $5$ and $6$, are tight; that is $C_4+C_6=H(X)$ and $C_5+C_6=H(Y)$. Let $K$ denote the random variable that is put on edge $6$ as shown in Figure \ref{fig:f1}. Using the source coding theorem and the fact that $C_4+C_6=H(X)$, one can conclude that $H(K|X^n)$ ought to be negligible if the demand of the first sink is to be fulfilled. Similarly $H(K|Y^n)$ ought to be negligible. Therefore $K$ corresponds to common randomness between $X^n$ and $Y^n$ in the sense of G\'{a}cs-K\"{o}rner \cite{GacsKorner}. This common information is equal to $\max H(T)$ where $T$ is both a function of $X$ and $Y$. For binary sources this common information is non-zero if and only if $X=Y$ or $X=1-Y$. Thus in the general case, the G\'{a}cs-K\"{o}rner common information for binary random variables is zero, implying that $\frac{1}{n}H(K)$ should be almost zero. This effectively implies that we are not using edge $6$ in communication at all. But the cuts at the two sinks were tight, implying that $C_4<H(X)$ and $C_5<H(Y)$. There is not enough rate to communicate $X^n$ and $Y^n$ through these links. This implies that the required communication demands cannot be simultaneously satisfied. Note that because even a small perturbation in the joint distribution can destroy the G\'{a}cs-K\"{o}rner common information between two random variables, a given network that supports transmission of certain correlated sources, may not support transmission of correlated sources in its immediate vicinity, a discontinuity type phenomenon.

Our second example is again based on the butterfly network of Figure \ref{fig:f2} with a passive eavesdropper on one of the nodes as shown in the figure. The eavesdropper can observe random variable $K$ but cannot tamper with any of the messages. The goal of the code is to keep the eavesdropper almost ignorant of the message of the first sink. That is, we would like to restrict our attention to those codes in which $K$ is almost independent of $X^n$. Further, assume that the cut at the second sink is tight, {\it i.e.}, $C_5+C_6=H(Y)$. We claim that one must then have $C_4\geq H(X)$, $C_6\leq H(Y|X)$, $C_5\geq I(X;Y)$. Otherwise, the sources are not transmittable.

 \begin{figure}
  \centering
  \begin{tikzpicture}[scale=1]
	\node[dot, label=above:$X^nY^n$] (s) at (0,0) {};
	\node[dot] (a) at (-2,-1) {}
    edge [<-,thick] node[auto] {$C_0$} (s);
	\node[dot] (b) at (2,-1) {}
    edge [<-,thick]  node[auto,swap] {$C_1$}  (s);
	\node[dot] (c) at (0,-2) {}
    edge [<-,thick] node[auto] {$C_2$} (b)
    edge [<-,thick] node[auto,swap] {$C_3$} (a);
	\node[dot] (d) at (0,-3.5) {$w$}
    edge [<-,thick] node[auto] {$C_6$} node[auto,swap] {$K$} (c);
	\node[dot, label=below:$\widehat{X}^n$] (t) at (-1.6,-4.4) {}
    edge [<-,thick] node[auto,swap] {$C_7$} node[auto] {$K$} (d)
    edge [<-,thick] node[auto] {} node[auto] {$C_4$} node[auto,swap] {$L$} (a);
	\node[dot, label=below:$\widehat{Y}^n$] (u) at (1.6,-4.4) {}
    edge [<-,thick] node[auto] {$C_8$} node[auto,swap] {$K$} (d)
    edge [<-,thick] node[auto,swap] {$C_5$} node[auto] {$R$} (b);
  \end{tikzpicture}
  \caption{Transmission of correlated sources over a butterfly network
    with secrecy constraint. A passive eavesdropper is on node
    $w$. $L$ and $R$ respectively are the messages on edge 4 and 5.}
  \label{fig:f2}
\end{figure}
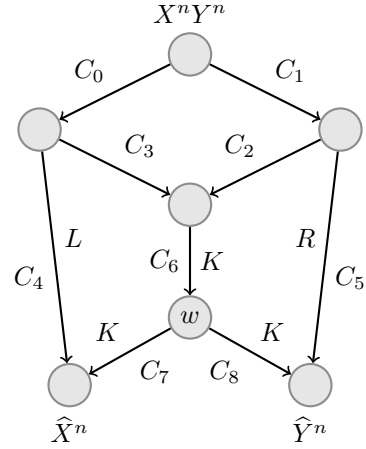
To see this, take a code of length $n$. Let $L$ and $R$ respectively denote the messages that are put on the edges with capacities $C_4$ and $C_5$. We have $nC_4\geq H(L)\geq I(L;X^n|K)\overset{(a)}{\cong}I(LK;X^n)\overset{(b)}{\cong} H(X^n)=nH(X)$. Approximation $(a)$ is a consequence of the fact that $K$ is almost independent of $X^n$, and $(b)$ follows from the fact that $X^n$ should (with high probability) be recoverable from $L$ and $K$. Therefore $C_4\geq H(X)$. Since $C_5+C_6=H(Y)$, that is the cut at the second sink is tight, both $K$ and $R$ must essentially be functions of $Y^n$. Thus we have $H(K)\cong I(K;Y^n|X^n)\leq H(Y^n|X^n)=nH(Y|X)$. Thus if $C_6>H(Y|X)$, the inequality $H(K)\leq nH(Y|X)$ implies that the edge with capacity $C_6$ is not fully used. But since $C_5+C_6=H(Y)$ and $Y^n$ is recoverable (with high probability) from $R$ and $K$, one must fully exploit the edge with capacity $C_6$. This is a contradiction.

These two examples can be recast in the same language if one considers the ``uncertainty" vector $[\frac{1}{n}H(K),~\frac{1}{n}H(K|X^n),~\frac{1}{n}H(K|Y^n),~ \frac{1}{n}H(K|X^n,Y^n)]$, {\it i.e.} the vector formed by listing the uncertainty left in $K$ conditioning on different subsets of $\{X^n, Y^n\}$. In the first example, each of $X^n$ and $Y^n$ is almost sufficient to determine $K$. Thus, the last three coordinates of the uncertainty vector are almost zero. Thus, the G\'{a}cs-K\"{o}rner common information can be reinterpreted as providing an upper bound for the first coordinate of the uncertainty vector when all the other coordinates are zero. In the second example, the secrecy constraint of $K$ being almost independent of $X^n$ imposes the constraint that the first and the second coordinate of the uncertainty vector are equal. The fact that $K$ is a function of $Y^n$ implies that the third and the fourth coordinate are almost zero. Thus the uncertainty vector is of the form $[a,~a,~0,~0]$. The constraint $C_6\leq H(Y|X)$ can be interpreted as saying that the maximum value of $a$ such that the uncertainty vector $[a,~a,~0,~0]$ is plausible, is $a=H(Y|X)$.

\section{The Uncertainty Region}
\label{Section:UncertaintyRegion}
The above section motivates the definition of the uncertainty region. In this section we formally define this region and then provide a complete characterization of it. In the next section we discuss the use of the uncertainty region in proving converses.

Given joint distribution $p(x,y)$ on discrete random variables $X$ and $Y$, let us define a four-dimensional region \emph{uncertainty region}, $U(p)$, as the closure of the set of non-negative 4-tuples $(u_1,u_2,u_3,u_4)$ such that for some $n$ and $p(k|x^n,y^n)$ we have
\begin{align*}
&u_1=\frac{1}{n}H(K),& u_2=\frac{1}{n}H(K|X^n),\\&u_3=\frac{1}{n}H(K|Y^n),& u_4=\frac{1}{n}H(K|X^n,Y^n).
\end{align*}
Intuitively speaking, the coordinates of this vector are the uncertainties of $K$ when i.i.d.\ copies of a subset of variables $X$ and $Y$ are available. We are interested in the set of all plausible uncertainty vectors. Note that we define the uncertainty region in terms of $p(x,y)$ alone, irrespective of the network architecture.

We now fully characterize the uncertainty region. The proof is provided in \cite{FullProofs}.
\begin{theorem}\label{Thm1} The region $U(p)$ is equal to the convex envelope of the union of the following four sets of points. The first set is the union over all $c\geq 0$ and $p(e|x,y)$ of non-negative 4-tuples $(u_1,u_2,u_3,u_4)$ where
\begin{align*}
&u_1=c+I(E;X,Y),\\&u_2=c+I(E;Y|X),\\&u_3=c+I(E;X|Y),\\&u_4=c.
\end{align*}
The second set of points is the union over all $c\geq 0$ of 4-tuples $(u_1,u_2,u_3,u_4)$ where
\begin{align*}
&u_1=c+H(Y|X),\\&u_2=c+H(Y|X),
\\&u_3=c,\\&u_4=c.
\end{align*}
The third set of points is the union over all $c\geq 0$ of 4-tuples $(u_1,u_2,u_3,u_4)$ where
\begin{align*}
&u_1=c+H(X|Y),\\&u_2=c,
\\&u_3=c+H(X|Y),\\&u_4=c.
\end{align*}
The fourth set of points is the union over all $c\geq 0$, $0\leq f\leq \max(H(X|Y),H(Y|X))$ of non-negative 4-tuples $(u_1,u_2,u_3,u_4)$ where
\begin{align*}
&u_1=c+f,\\&u_2=c+\min(f,H(Y|X)),
\\&u_3=c+\min(f,H(X|Y)),\\&u_4=c.
\end{align*}
\end{theorem}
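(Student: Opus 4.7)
The plan is to prove the two inclusions separately: every point in the claimed convex envelope lies in $U(p)$ (achievability), and every achievable uncertainty vector lies in the envelope (converse). Since $U(p)$ is trivially closed under convex combinations via time-sharing, it suffices for achievability to exhibit one random-coding construction realizing each of the four parameterized sets.

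Concretely, for Set 1 I would pick an auxiliary $E$ with distribution $p(e|x,y)$, generate $E^n$ conditionally i.i.d.\ given $(X^n,Y^n)$, and put $K=(E^n,M)$ where $M$ consists of $nc$ fresh uniform bits; direct computation of the four entropies using the i.i.d.\ factorization yields $(c+I(E;X,Y),\, c+I(E;Y|X),\, c+I(E;X|Y),\, c)$. For Sets 2 and 3 I would apply standard Slepian--Wolf random binning to $Y^n$ (respectively $X^n$) at rate $H(Y|X)$ (respectively $H(X|Y)$), padded with $nc$ independent bits; the asymptotic uniformity of the bin index together with its independence from the non-observed source produce the stated coordinates. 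Set 4 is the most delicate: I would form $K$ as the bitwise XOR of a random rate-$f$ hash of $Y^n$ with a Slepian--Wolf bin of $X^n$ at the same rate, plus $nc$ independent bits. The constraint $f\le H(X|Y)$ (assuming without loss of generality $H(Y|X)\le H(X|Y)$, with a symmetric argument otherwise) ensures that the $X^n$-hash is asymptotically independent of $Y^n$, so the XOR masks the $Y^n$-hash and makes $K$ asymptotically independent of $Y^n$. The value of $H(K|X^n)$ then tops out at $nH(Y|X)$ when $f\ge H(Y|X)$ (the hash is injective on typical $Y^n$ given $X^n$) and equals $nf$ otherwise, reproducing the $\min$ expressions in the statement; the same logic applied symmetrically gives $H(K|Y^n)$.

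For the converse I would single-letterize using a time-sharing random variable $Q$ uniform on $\{1,\ldots,n\}$: set $\tilde X=X_Q$, $\tilde Y=Y_Q$, and take an auxiliary such as $\tilde E=(K,X^{Q-1},Y_{Q+1}^{n},Q)$, chosen so that $(\tilde X,\tilde Y)\sim p(x,y)$ and $\tilde E$ has a well-defined conditional given $(\tilde X,\tilde Y)$. The chain rule and i.i.d.\ structure of $(X^n,Y^n)$ then rewrite each of the four normalized entropies as a single-letter quantity in $(\tilde E,\tilde X,\tilde Y)$ plus the residual $c=\frac{1}{n}H(K|X^n,Y^n)$. I expect the main obstacle to be showing that the resulting single-letter vector lies in the described envelope rather than in something strictly larger: the Wyner-type parametrization of Set 1 alone does not capture boundary behavior, and one must invoke the submodularity bound $u_1+u_4-u_2-u_3\le I(X;Y)$ together with the Shannon-type bounds $u_2-u_4\le H(Y|X)$, $u_3-u_4\le H(X|Y)$, $u_1-u_2\le H(X)$, $u_1-u_3\le H(Y)$, to argue via case analysis that any achievable point falling outside the image of the Set 1 parametrization must be expressible as a convex combination of points from Sets 2, 3, and 4.
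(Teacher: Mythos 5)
Your achievability argument for the first set has a real flaw. If you generate $E^n$ conditionally i.i.d.\ given $(X^n,Y^n)$ and append $nc$ fresh uniform bits, a direct computation gives $u_1=H(E)+c$, $u_2=H(E|X)+c$, $u_3=H(E|Y)+c$, $u_4=H(E|X,Y)+c$, not the claimed tuple. Rewriting, you reach the cone $\bigl(I(E;X,Y),I(E;Y|X),I(E;X|Y),0\bigr)+c'(1,1,1,1)$ only for $c'\ge H(E|X,Y)$: the extreme points with small $u_4$ (in particular $c'=0$), which are exactly the ones that matter in the converse applications, are missed whenever $E$ is not a deterministic function of $(X,Y)$. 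Closing this requires producing, for a generic $p(e|x,y)$, a sequence $K_n$ that is asymptotically a deterministic function of $(X^n,Y^n)$ while still achieving $\frac1nH(K_n|X^n)\to I(E;Y|X)$ and $\frac1nH(K_n|Y^n)\to I(E;X|Y)$. That is a Wyner-common-information style coding theorem, not an i.i.d.\ calculation; the paper invokes part 1 of Theorem 5 of \cite{NewShannon} for precisely this step. Your Slepian--Wolf constructions for Sets 2 and 3 and the truncated/XOR'd bin-index construction for Set 4 match the paper's.

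On the converse there are two gaps. First, with either choice of auxiliary --- your $\tilde E=(K,X^{Q-1},Y_{Q+1}^n,Q)$ or the paper's $(K,X^{J-1},Y^{J-1},J)$ --- one obtains $\frac1nI(K;X^nY^n)=I(\tilde E;\tilde X\tilde Y)$ as an equality, but only the one-sided inequalities $\frac1nI(K;Y^n|X^n)\ge I(\tilde E;\tilde Y|\tilde X)$ and $\frac1nI(K;X^n|Y^n)\ge I(\tilde E;\tilde X|\tilde Y)$; the four normalized entropies cannot all be ``rewritten as a single-letter quantity plus $c$.'' This is exactly why the paper's Theorem~\ref{Thm2} carries nonnegative slacks $g,h$. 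Second, you explicitly leave the decisive step --- showing that any achievable vector outside the Set~1 parametrization is a convex combination of Sets 2--4, via the submodularity and Shannon-type inequalities you list --- as an unexecuted plan. The paper instead works in the dual: for each sign pattern of $(\lambda_1,\lambda_2,\lambda_3)$ with $\sum_i\lambda_i\le0$ (first reducing to deterministic $K$ by the same Zhang theorem), it maximizes $\sum_i\lambda_iu_i$ directly and shows the supporting hyperplane touches one of the four sets, using the single-letterization only where the inequality points the right way and an elementary bound ($I(K;X)\ge[H(K)-H(Y|X)]_+$ for deterministic $K$) elsewhere. You would need either to carry out that dual sign-case analysis or to actually verify that your primal inequality list is both valid and sufficient; as written, the converse is a sketch, not a proof.
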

\begin{remark}One can use the strengthened Carath\'{e}odory theorem of Fenchel \cite{Fenchel} to prove a cardinality bound of $|\mathcal{X}||\mathcal{Y}|+2$ on the auxiliary random variable $E$ in the first set of points.
\end{remark}

Although the above theorem characterizes the region, the following outer bound is useful in some instances. The extreme points of this outer bound belong to the first set of points of the above theorem.
\begin{theorem}\label{Thm2} The uncertainty region is a subset of the union over all $c,g,h\geq 0$ and $p(e|x,y)$ of 4-tuples $(u_1,u_2,u_3,u_4)$ where
\begin{align*}
&u_1=c+I(E;XY)\\&u_2=c+I(E;Y|X)+g\\&u_3=c+I(E;X|Y)+h\\&u_4=c.
\end{align*}
\end{theorem}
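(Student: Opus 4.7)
The plan is to carry out a standard time-sharing single-letterization: starting from a length-$n$ construction $K\sim p(k\mid x^n,y^n)$ attaining an uncertainty vector $(u_1,u_2,u_3,u_4)$, I will build a single-letter auxiliary $E$ such that non-negative $c,g,h$ and a conditional law $p(e\mid x,y)$ with the claimed identities exist. First, set $c:=u_4=\frac{1}{n}H(K\mid X^n,Y^n)\geq 0$. Let $Q$ be uniform on $\{1,\dots,n\}$, independent of $(X^n,Y^n,K)$, and define $X:=X_Q$, $Y:=Y_Q$, together with
\[
E:=(K,\,X^{Q-1},\,Y^{Q-1},\,Q).
\]
Because the source is i.i.d., $(X,Y)\sim p(x,y)$, and the triple $(E,X,Y)$ factors as $p(x,y)\,p(e\mid x,y)$ with the induced conditional, which is exactly the shape the theorem requires.

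Second, I would verify the $u_1$-coordinate by the standard identity
\begin{align*}
I(E;XY) &= \frac{1}{n}\sum_{i=1}^n I\bigl(K,X^{i-1},Y^{i-1};\,X_iY_i\bigr) = \frac{1}{n}\sum_{i=1}^n I\bigl(K;\,X_iY_i\mid X^{i-1},Y^{i-1}\bigr) \\
&= \frac{1}{n}I(K;X^nY^n) = u_1-c,
\end{align*}
where $I(X^{i-1}Y^{i-1};X_iY_i)=0$ by i.i.d.-ness of the source. For the $u_2$-coordinate,
\begin{align*}
I(E;Y\mid X) &= \frac{1}{n}\sum_{i=1}^n \bigl[H(Y_i\mid X_i)-H(Y_i\mid K,X^i,Y^{i-1})\bigr] \\
&\leq \frac{1}{n}\sum_{i=1}^n \bigl[H(Y_i\mid X_i)-H(Y_i\mid K,X^n,Y^{i-1})\bigr] = \frac{1}{n}I(K;Y^n\mid X^n) = u_2-c,
\end{align*}
where the inequality is just ``conditioning reduces entropy'' applied to the extra coordinates $X_{i+1}^n$. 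Thus $g:=(u_2-c)-I(E;Y\mid X)\geq 0$; symmetrically, $h:=(u_3-c)-I(E;X\mid Y)\geq 0$. The passage from a single achievable point to the full closure $U(p)$ is a routine continuity/limit argument.

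The proof is essentially bookkeeping once the choice of $E$ is made, and that choice is the only subtle point. The asymmetric ``past-only'' form $(K,X^{Q-1},Y^{Q-1},Q)$ is what makes the conditioning-reduces-entropy step in the $u_2$ and $u_3$ computations push the inequality in the direction needed to guarantee $g,h\geq 0$. Using a mixed past/future auxiliary (say $Y_{i+1}^n$ in place of $Y^{i-1}$) would produce error terms of uncontrolled sign and break the argument. Hence the main obstacle is simply recognizing which single-letterization aligns with the particular $c{+}I(E;\cdot\mid\cdot){+}\mathrm{slack}$ shape on the right-hand side of the theorem.
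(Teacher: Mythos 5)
Your proposal matches the paper's own argument essentially line for line: you choose $c=\tfrac{1}{n}H(K\mid X^n,Y^n)$, introduce a uniform time index independent of $(K,X^n,Y^n)$ (you call it $Q$, the paper calls it $J$), single-letterize with the ``past-only'' auxiliary $E=(K,X^{Q-1},Y^{Q-1},Q)$, obtain equality for the $u_1$ coordinate and one-sided inequalities for $u_2,u_3$ by dropping future-$X$ (resp.\ future-$Y$) coordinates, and absorb the slack into $g,h\geq 0$. The paper's own verification uses the identity $I(K;Y^n\mid X^n)=\sum_j I(K,X^{j-1},X_{j+1}^n,Y^{j-1};Y_j\mid X_j)\geq \sum_j I(K,X^{j-1},Y^{j-1};Y_j\mid X_j)$, which is the same ``conditioning-reduces-entropy'' inequality you state; you have simply written it in entropy-difference form. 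Your remark about why the mixed past/future auxiliary would fail is a correct and useful observation, and your handling of the closure step (delegated to a routine limit argument, implicitly relying on the cardinality bound the paper flags in its Remark after Theorem~1) is at the same level of rigor as the paper itself.
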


\section{Writing Converses Using the Uncertainty Region}
Take an arbitrary directed network $\mathcal{N}$ with a source $s$ and two sinks
$t_1$ and $t_2$. Suppose that the source observes i.i.d.\ copies of
$X$, $Y$ jointly distributed according to $p(x,y)$. Sink $t_1$ is interested in the i.i.d.\ copies of $X$, while
sink $t_2$ is interested in the i.i.d.\ copies of $Y$. The capacity of an edge $e$ is denoted by $C_e$. An $(n,\epsilon)$ code for this network consists of a set of encoding functions at the intermediate nodes such that $X^n$ and $Y^n$ can be recovered at the first and second sinks respectively with probabilities of error less than or equal to $\epsilon$, and furthermore the number of bits passed on a given edge $e$ is at most $n(C_e+\epsilon)$.

In order to write a converse for $\mathcal{N}$ we take the edges one by one and write a converse for that particular edge. At the end we intersect all such converses.

Take an $(n,\epsilon)$ code. Take a particular edge $e$ and let $K$ denote the random variable that is put on the edge $e$. The idea is to find as many constraints as possible on the uncertainty vector associated to $K$, {\it i.e.} $[\frac{1}{n}H(K), \frac{1}{n}H(K|X^n), \frac{1}{n}H(K|Y^n), \frac{1}{n}H(K|X^n,Y^n)]$. Let us denote the first coordinate $\frac{1}{n}H(K)$ by $d_e$, defined as the entropy rate of the random variable on edge $e$. This $d_e$ is required to satisfy $0\leq d_e\leq C_e+\epsilon$. Every cut that has the edge $e$ and separates the source from the first sink imposes a constraint on $\frac{1}{n}H(K|X^n)$ as follows.
\begin{lemma}\label{lemma1} Take an arbitrary cut (containing $e$) from the source to the first sink, and let $Cut_x$ denote the sum of the capacities of the edges on this cut. Then $\frac{1}{n}H(K|X^n)$ must satisfy the following inequality:
$$\frac{1}{n}H(K|X^n)\leq Cut_x-C_e+d_e-H(X)+k(\epsilon)$$
for some function $k(\epsilon)$ that converges to zero as $\epsilon$ converges to zero.\end{lemma}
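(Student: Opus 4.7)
The plan is to reduce Lemma 1 to a standard Fano-style bound on $H(X^n|K)$. Using the identity $H(K|X^n)=H(K)-I(K;X^n)=H(K)-H(X^n)+H(X^n|K)$ together with $H(X^n)=nH(X)$ and the definition $d_e=\frac{1}{n}H(K)$, the claimed inequality is equivalent to
$$\frac{1}{n}H(X^n|K)\ \leq\ Cut_x-C_e+k(\epsilon).$$
Thus the entire task is to bound the conditional entropy of the source given the message on edge $e$ in terms of the remaining capacity of the chosen cut.

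To do this I will let $M=(M_{e'}:e'\in\text{cut},\ e'\ne e)$ denote the tuple of messages carried on the other edges of the cut. Because the cut separates $s$ from $t_1$, a topological-order walk through the $t_1$-side of the cut shows that every variable on that side --- in particular the reconstruction $\widehat X^n$ --- is a deterministic function of $(K,M)$ once any local randomness at sink-side nodes has been absorbed into the code (the standard ``fix the common randomness'' step). Combining this with Fano's inequality applied to $\Pr(\widehat X^n\neq X^n)\le\epsilon$ then yields
$$\frac{1}{n}H(X^n\mid K,M)\ \leq\ k_1(\epsilon),$$
for some function $k_1(\epsilon)\to 0$.

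The rest is a routine chain-rule decomposition. From
$$H(X^n\mid K)\ \leq\ H(X^n,M\mid K)\ =\ H(M\mid K)+H(X^n\mid K,M)\ \leq\ H(M)+n\,k_1(\epsilon),$$
and the edge-capacity bound $H(M)\le\sum_{e'\ne e}n(C_{e'}+\epsilon)=n(Cut_x-C_e)+nm\epsilon$, where $m$ is the number of non-$e$ edges on the cut, I obtain
$$\frac{1}{n}H(X^n\mid K)\ \leq\ Cut_x-C_e+m\epsilon+k_1(\epsilon).$$
Setting $k(\epsilon)=m\epsilon+k_1(\epsilon)$ (and noting that $m$ is bounded by the fixed number of edges of $\mathcal N$, so $k(\epsilon)\to 0$) and plugging into the equivalent form established above completes the argument.

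The one step requiring genuine care, and the place I expect the bookkeeping to matter, is the justification that $X^n$ is recoverable from the cut messages $(K,M)$ alone: one must process the $t_1$-side of the cut in topological order and verify that any local randomness can be absorbed without increasing the entropies in play. Once this standard cut property is granted, everything else is a short Fano-plus-chain-rule computation, and no properties of the uncertainty region or of the joint distribution $p(x,y)$ beyond $H(X^n)=nH(X)$ are needed.
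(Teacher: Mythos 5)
Your proof is correct and matches the paper's approach: both use Fano's inequality to bound $H(X^n\mid K,M)$, bound $H(M)$ by the sum of the non-$e$ cut capacities, and then close the argument with a chain-rule/subadditivity computation. The only difference is cosmetic bookkeeping — you rearrange via $H(K\mid X^n)=H(K)-H(X^n)+H(X^n\mid K)$ and then expand $H(X^n\mid K)$, whereas the paper expands $H(Q,K\mid X^n)$ directly — but the terms collected and the resulting bound $k(\epsilon)=k_1(\epsilon)+m\epsilon$ are identical.
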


\begin{proof} Let $Q$ denote the collection of random variables passing over the edges of the cut (except $e$). As shown in \cite{FullProofs}, $\frac{1}{n}H(Q)\leq Cut_x-C_e+m\epsilon$, where $m$ is the number of edges in the graph. Since $(Q,K)$ is the  collection of the random variables passing the edges of the cut, $X^n$ should be recoverable from $(Q,K)$ with probability of error less than or equal to $\epsilon$. Thus, by Fano's inequality $\frac{1}{n}H(X^n|Q,K)\leq k_1(\epsilon)$ for some function $k_1(\epsilon)$ that converges to zero as $\epsilon$ converges to zero. We have
\begin{align*}\frac{1}{n}H(K|&X^n)\leq \frac{1}{n}H(Q,K|X^n)=\frac{1}{n}H(Q,K,X^n)-\frac{1}{n}H(X^n)\\&\leq\frac{1}{n}H(Q)+\frac{1}{n}H(K)+\frac{1}{n}H(X^n|Q,K)-H(X)\\&\leq Cut_x-C_e+m\epsilon+d_e-H(X)+k_1(\epsilon).\end{align*}
We get the desired result by setting $k(\epsilon)=k_1(\epsilon)+m\epsilon$.
\end{proof}
Other restrictions on $\frac{1}{n}H(K|X^n)$ may come from secrecy constraints. For instance if $K$ is observed by an eavesdropper and there is an equivocation rate constraint on how much the eavesdropper can learn about $X^n$, say $\frac{1}{n}I(K;X^n)\leq R$, we can conclude that $\frac{1}{n}H(K|X^n)\geq \frac{1}{n}H(K)-R=d_e-R$.

One can use similar ideas to impose constraints on $\frac{1}{n}H(K|Y^n)$.

If there is no secrecy constraint, without loss of generality we assume that $K$ is a function of $(X^n,Y^n)$ as randomized coding would only reduce the throughput. Thus the last coordinate $\frac{1}{n}H(K|X^n,Y^n)$ will be zero. The following lemma (whose proof is similar to that of Lemma \ref{lemma1}, and hence is omitted) is also useful.
\begin{lemma}\label{lemma2} Take an arbitrary cut containing $e$ from the source to the first sink, and let $Cut_{x,y}$ denote the sum of the capacities of the edges on this cut. Then $\frac{1}{n}H(K|X^n,Y^n)$ must satisfy the following inequality:
$$\frac{1}{n}H(K|X^n,Y^n)\leq Cut_{x,y}-C_e+d_e-H(X,Y)+k(\epsilon)$$
for some function $k(\epsilon)$ that converges to zero as $\epsilon$ converges to zero.\end{lemma}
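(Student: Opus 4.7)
The plan is to follow the template of Lemma~\ref{lemma1} almost verbatim, with the only change being that the cut now separates the source from \emph{both} sinks (so that both $X^n$ and $Y^n$ are determined, up to Fano error, by the edge random variables crossing the cut), which is what makes $H(X,Y)$ rather than $H(X)$ appear on the right-hand side. So I would read the statement as quantifying over cuts containing $e$ that separate $s$ from $(t_1,t_2)$.

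First, I would let $Q$ denote the tuple of random variables carried on the edges of the chosen cut other than $e$. Invoking the same bound on edge-entropies from \cite{FullProofs} that was already used in Lemma~\ref{lemma1}, I obtain
\begin{equation*}
\tfrac{1}{n}H(Q)\;\le\;Cut_{x,y}-C_e+m\epsilon,
\end{equation*}
where $m$ is the number of edges of $\mathcal{N}$. Next, since $(Q,K)$ is the collection of all random variables crossing a cut that separates $s$ from both sinks, both $\widehat X^n$ and $\widehat Y^n$ are deterministic functions of $(Q,K)$, and the code guarantees $\Pr(\widehat X^n\neq X^n)\le\epsilon$, $\Pr(\widehat Y^n\neq Y^n)\le\epsilon$. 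A single application of Fano's inequality then yields $\tfrac{1}{n}H(X^n,Y^n\mid Q,K)\le k_1(\epsilon)$ for some $k_1(\epsilon)\to 0$.

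Finally, I would chain these two inputs exactly as in Lemma~\ref{lemma1}:
\begin{align*}
\tfrac{1}{n}H(K\mid X^n,Y^n)&\le \tfrac{1}{n}H(Q,K\mid X^n,Y^n)\\
&=\tfrac{1}{n}H(Q,K,X^n,Y^n)-H(X,Y)\\
&\le \tfrac{1}{n}H(Q)+\tfrac{1}{n}H(K)+\tfrac{1}{n}H(X^n,Y^n\mid Q,K)-H(X,Y)\\
&\le Cut_{x,y}-C_e+m\epsilon+d_e-H(X,Y)+k_1(\epsilon),
\end{align*}
so setting $k(\epsilon):=k_1(\epsilon)+m\epsilon$ gives the claim.

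There is really no obstacle here beyond the bookkeeping: the only conceptual point that differs from Lemma~\ref{lemma1} is the need for the cut to block off \emph{both} sinks so that $Y^n$ (in addition to $X^n$) is recoverable from $(Q,K)$, which is precisely what lets us replace $H(X)$ by $H(X,Y)$. Everything else---the edge-entropy bound, Fano's inequality, and the entropy manipulation---is identical to Lemma~\ref{lemma1}, which is exactly why the authors remark that the proof is analogous and omit it.
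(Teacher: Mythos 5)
Your proof is correct and is exactly the route the paper intends when it says the argument is ``similar to that of Lemma~\ref{lemma1}'': replace $X^n$ by $(X^n,Y^n)$, use $\frac{1}{n}H(Q)\le Cut_{x,y}-C_e+m\epsilon$, apply Fano to $H(X^n,Y^n\mid Q,K)$, and chain the entropies. You also correctly diagnosed that the phrase ``from the source to the first sink'' in the lemma statement must be read as ``from the source to both sinks'' --- otherwise $(Q,K)$ need not determine $Y^n$, Fano does not apply, and the claimed inequality can fail (e.g.\ a direct edge $s\to t_1$ of capacity $H(X)$ gives $0\le -H(Y|X)+k(\epsilon)$); the paper's own later use of $Mincut^e_{x,y}$ as a cut separating $s$ from $(t_1,t_2)$ confirms this reading.
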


Thus for every $(n,\epsilon)$ code we write all such constraints on the coordinates of $$\left[\frac{1}{n}H(K), \frac{1}{n}H(K|X^n), \frac{1}{n}H(K|Y^n), \frac{1}{n}H(K|X^n,Y^n)\right].$$ Lastly we look at these constraints over a sequence of codes $(n_i,\epsilon_i)$ where $\epsilon_i\rightarrow 0$ as $i\rightarrow \infty$. As an example, consider a problem with no secrecy constraints. Let $Mincut^e_x$ be the smallest cut that has the edge $e$ and separates the source from the first sink. $Mincut^e_{y}$ and $Mincut^e_{x,y}$ are defined similarly. For the code $(n_i,\epsilon_i)$ we have
\begin{align*}\frac{1}{n_i}H(K_i)=&d_{ei},\\
\frac{1}{n_i}H(K_i|X^{n_i})\leq& Mincut^e_x-C_e\\&+d_{ei}-H(X)+k(\epsilon_i),\\
\frac{1}{n_i}H(K_i|Y^{n_i})\leq& Mincut^e_y-C_e\\&+d_{ei}-H(Y)+k(\epsilon_i),\\
\frac{1}{n_i}H(K_i|X^{n_i},Y^{n_i})=0\leq& Mincut^e_{x,y}-C_e\\&+d_{ei}-H(X,Y)+k(\epsilon_i).
\end{align*}
There is a convergent subsequence $d_{ei}$ converging to some $d_e^*\leq C_e$. Therefore the region $U(p)$ contains a point $[u_1, u_2, u_3, u_4]$ such that
\begin{align*}&u_1=d_{e}^*,\\&
u_2\leq Mincut^e_x-C_e+d_{e}^*-H(X),\\&
u_3\leq Mincut^e_y-C_e+d_{e}^*-H(Y),\\&
u_4=0\leq Mincut^e_{x,y}-C_e+d_{e}^*-H(X,Y).
\end{align*}
From Theorem 2 we know that there exist $c,g,h\geq 0$ and $p(e|x,y)$ such that
\begin{align*}
&u_1=c+I(E;X,Y),~~~~~~~~u_2=c+I(E;Y|X)+g,\\&u_3=c+I(E;X|Y)+h,~~~~u_4=c.
\end{align*}
Thus, there exists a $p(e|x,y)$ such that
\begin{align}&d_{e}^*=I(E;X,Y)\leq C_e\label{eqn:edgeconverse1}\\&
Mincut^e_x-C_e+d_{e}^*-H(X)\geq I(E;Y|X)\label{eqn:edgeconverse2}\\&
Mincut^e_y-C_e+d_{e}^*-H(Y)\geq I(E;X|Y).\label{eqn:edgeconverse3}
\end{align}
And furthermore $0\leq Mincut^e_{x,y}-C_e+d_{e}^*-H(X,Y)$. These inequalities together form a converse for the edge $e$. We can repeat this process for all the edges and take intersection over all such converses.

\subsection{Comparison with the cut-set bound}
Let us compare the above converse with the one given by the cut-set bound. Take some edge $e$. The constraints
\begin{align*}&d_{e}^*=I(E;X,Y)\leq C_e,\\&
Mincut^e_x-C_e+d_{e}^*-H(X)\geq I(E;Y|X),\\&
Mincut^e_y-C_e+d_{e}^*-H(Y)\geq I(E;X|Y),\\&
Mincut^e_{x,y}-C_e+d_{e}^*-H(X,Y)\geq 0
\end{align*}
imply that $Mincut^e_x-H(X)\geq 0$, $Mincut^e_y-H(Y)\geq 0$ and $Mincut^e_{x,y}-H(X,Y)\geq 0$. Since edge $e$ was arbitrary, one can see that this converse is no worse than the cut-set bound. Let us consider the network given in figure \ref{fig:f3}. Assume that $C_3=C_4=C_5$. This network is known as the Gray-Wyner system \cite{GrayWyner}.
 \begin{figure}
   \centering
    \begin{tikzpicture}[scale=1.5]
     \node[dot, label=above:$X^nY^n$] (c) at (0,0) {};
     \node[dot] (d) at (0,-1.3) {}
	edge [<-,thick] node[auto, pos=0.3] {$C_3$} (c);
     \node[dot, label=below:$\widehat{X}^n$] (t) at (-1.2,-2) {}
	edge [<-,thick] node[auto, swap] {$C_4$} (d)
	edge [<-,thick] node[auto] {$C_1$} (c);
     \node[dot, label=below:$\widehat{Y}^n$] (u) at (1.2,-2) {}
	edge [<-,thick] node[auto] {$C_5$} (d)
	edge [<-,thick] node[auto, swap] {$C_2$} (c);
   \end{tikzpicture}
   \caption{This network is the Gray-Wyner system when $C_3=C_4=C_5$.}
   \label{fig:f3}
 \end{figure}
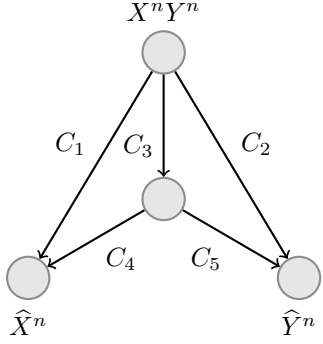
 Let us write the converse for the edge number 3. The converse says that there exists a $p(e|x,y)$ such that
\begin{align*}&d_{3}^*=I(E;X,Y)\leq C_3,\\&
Mincut^3_x-C_3+d_{3}^*-H(X)\geq I(E;Y|X),\\&
Mincut^3_y-C_3+d_{3}^*-H(Y)\geq I(E;X|Y),\\&
Mincut^3_{x,y}-C_3+d_{3}^*-H(X,Y)\geq 0.
\end{align*}
Note that $Mincut^3_x=C_4+C_1=C_3+C_1$, $Mincut^3_y=C_5+C_2=C_3+C_2$ and $Mincut^3_{x,y}=C_1+C_2+C_3$. Thus
\begin{align*}&d_{3}^*=I(E;X,Y)\leq C_3,\\&
C_3+C_1-C_3+d_{3}^*-H(X)\geq I(E;Y|X),\\&
C_3+C_2-C_3+d_{3}^*-H(Y)\geq I(E;X|Y),\\&
C_1+C_2+C_3-C_3+d_{3}^*-H(X,Y)\geq 0.
\end{align*}
After simplification and substituting the value of $d_{3}^*=I(E;X,Y)$ from the first equation into the other equations we get that
\begin{align*}&C_3\geq I(E;X,Y),\\&
C_1\geq I(E;Y|X)-I(E;X,Y)+H(X)=H(X|E),\\&
C_2\geq I(E;X|Y)-I(E;X,Y)+H(Y)=H(Y|E),\\&
C_1+C_2\geq H(X,Y)-I(E;X,Y)=H(X,Y|E).
\end{align*}
The last equation is redundant. Therefore we get
\begin{align*}&C_3\geq I(E;X,Y), C_1\geq H(X|E), C_2\geq H(Y|E)
\end{align*}
for some $p(e|x,y)$. But this is exactly the solution to the Gray-Wyner system \cite{GrayWyner}. Therefore the new converse is tight. On the other hand the cut-set bound is not tight for this network. Let us consider the minimum of $C_3$ such that $C_1+C_2+C_3=H(X,Y)$ over the actual region and the cut-set bound. It is known that in the Gray-Wyner system this minimum is equal to the Wyner's common information \cite{Wyner}. However, in the cut-set bound this minimum is $I(X;Y)$ which can be strictly less than the Wyner's common information. Therefore the new converse represents a strict improvement over the cut-set bound.

\subsection{Using ``Edge-Cuts" to write better converses}
The new converse as expressed above is not also tight in general. In the above discussion we observed that every cut that has the edge $e$ and separates the source from the first sink imposes a constraint on $\frac{1}{n}H(K|X^n)$. However it turns out that one can use the technique to write strictly better converses by looking at what might be termed ``edge-cuts" (certain cuts in certain subgraphs of the original graph) if there are multiple source nodes in the network. Our concept of edge-cuts should not be confused with that of \cite{KramerSavari}. 

In order to construct an explicit example for multi-source problems that shows the benefit of using edge-cuts, we consider a directed network with two sources $s_1$ and $s_2$ and two sinks
$t_1$ and $t_2$ of Figure \ref{fig:f3N} under the assumption that $C_6=C_7=C_8$.
%\begin{figure}
%\centering
%\includegraphics[width=55mm]{ConversePic3.jpg}
%\caption{An explicit example that shows the benefit of using edge-cuts.}\label{fig:f3}
%\end{figure}
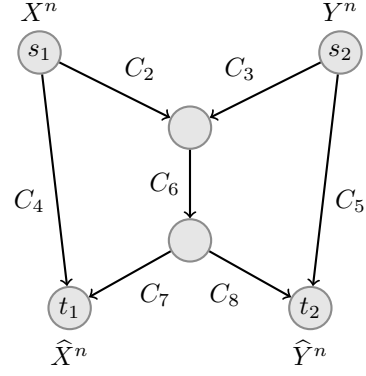
\begin{figure}
\centering
  \begin{tikzpicture}[scale=1]
	\node[dot, label=above:$X^n$] (a) at (-2,-1) {$s_1$};
	\node[dot, label=above:$Y^n$] (b) at (2,-1) {$s_2$};
	\node[dot] (c) at (0,-2) {}
    edge [<-,thick] node[auto] {$C_3$} (b)
    edge [<-,thick] node[auto,swap] {$C_2$} (a);
	\node[dot] (d) at (0,-3.5) {}
    edge [<-,thick] node[auto] {$C_6$} (c);
	\node[dot, label=below:$\widehat X^n$] (t) at (-1.6,-4.4) {$t_1$}
    edge [<-,thick] node[auto,swap] {$C_7$} (d)
    edge [<-,thick] node[auto] {} node[auto] {$C_4$} (a);
	\node[dot, label=below:$\widehat Y^n$] (u) at (1.6,-4.4) {$t_2$}
    edge [<-,thick] node[auto] {$C_8$} (d)
    edge [<-,thick] node[auto,swap] {$C_5$} (b);
  \end{tikzpicture}
\caption{An explicit example for a multi-source problem that shows the benefit of using edge-cuts. We write the edge-cut for edge 6.}\label{fig:f3N}
\end{figure}

 Suppose that the source $s_1$ observes i.i.d.\ copies of the random variable
$X$, and source $s_2$ observes i.i.d.\ copies of the random variable $Y$. As before, random variables $X$ and $Y$ are jointly distributed according to $p(x,y)$, and sink $t_1$ is interested in the i.i.d. copies of $X$ while
sink $t_2$ is interested in the i.i.d. copies of $Y$. We consider the problem of
reliable transmission to fulfill the demands of both
sink nodes, with probability of decoding error converging to zero as the number of i.i.d. observations of $X$, $Y$ grows without bound.

\subsubsection{edge-cuts}
Take an arbitrary edge $e$ in a directed graph from a vertex $v_1$ to a vertex $v_2$. Consider the subgraph formed by including all the directed paths from the two sources to $v_2$. We can think of $v_2$ as an imaginary sink in this subgraph. Let $K$ denote the random variable carried on the $v_1-v_2$ edge. We can consider three types of cuts between the two sources and the imaginary sink in this subgraph: 1. cuts that that separate the first source from node $v_2$ but do not separate the second source from node $v_2$, 2: cuts that separate the second source from $v_2$ but do not separate the first source from node $v_2$, and 3. cuts that separate both sources from node $v_2$. Let $Cut_{x,y,v_2}$ denote the sum-capacity of an arbitrary cut that separates both sources from node $v_2$ in the subgraph. We have
$$Cut_{x,y,v_2}\geq \frac{1}{n}I(K;X^n,Y^n)$$
Let $Cut_{x,v_2}$ denote the sum-capacity of an arbitrary cut that separates the first source from node $v_2$ in the subgraph. We have
$$Cut_{x,v_2}\geq \frac{1}{n}I(K;X^n|Y^n)$$
Similarly, let $Cut_{y,v_2}$ denote the sum-capacity of an arbitrary cut that separates the second source from node $v_2$ in the subgraph. We have
$$Cut_{y,v_2}\geq \frac{1}{n}I(K;Y^n|X^n)$$
These inequalities have consequences for the uncertainty vector $[\frac{1}{n}H(K),~\frac{1}{n}H(K|X^n),~\frac{1}{n}H(K|Y^n),~ \frac{1}{n}H(K|X^n,Y^n)]$.

Consider the edge $6$ in Figure \ref{fig:f3N}. The resulting subgraph formed by including all the directed paths from the two sources to the end point of this edge is shown in Figure \ref{fig:f4}.
%\begin{figure}
%\centering
%\includegraphics[width=55mm]{ConversePic4.jpg}
%\caption{The subgraph formed by including all the directed paths from the sources to the end point of edge $6$.}\label{fig:f4}
%\end{figure}
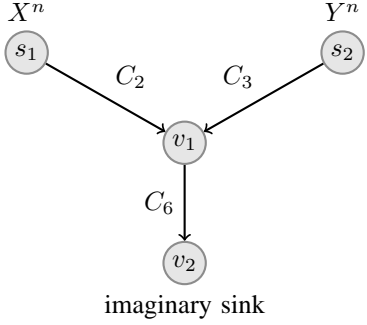
\begin{figure}
\centering
  \begin{tikzpicture}[scale=1]
     \node[dot, label=below:imaginary sink] (c) at (0,-1.6) {$v_2$};
     \node[dot] (d) at (0,0) {$v_1$}
	edge [->,thick] node[auto,swap] {$C_6$} (c);
     \node[dot, label=above:$X^n$] (t) at (-2.1,1.2) {$s_1$}
	edge [->,thick] node[auto] {$C_2$} (d);
     \node[dot, label=above:$Y^n$] (u) at (2.1,1.2) {$s_2$}
	edge [->,thick] node[auto,swap] {$C_3$} (d);
   \end{tikzpicture}
\caption{The subgraph formed by including all the directed paths from the two sources to the end point of edge $6$, i.e. the node $v_2$. We can think of $v_2$ as an imaginary sink in this subgraph. Edge-cuts are the cuts between the two sources and the imaginary sink in this subgraph.}\label{fig:f4}
\end{figure}
Let $K_6$ denote the random variable carried on this edge. Observe that edge $2$ is a cut that separates the first source only from the imaginary sink. Therefore we can write $\frac{1}{n}I(K_6;X^n|Y^n)\leq C_2$. Since $H(K_6|X^n,Y^n)=0$, we conclude that $\frac{1}{n}H(K_6|Y^n)\leq C_2$. It is not possible to get this constraint on the uncertainty of $K_6$ given $Y^n$ by looking at the cuts between the sources and the sinks in the original graph. To see this note that if we use equations (\ref{eqn:edgeconverse1}-\ref{eqn:edgeconverse3}) for all the cuts that have the edge $6$ we get the following set of equations:
\begin{align*}
&d_6=I(E_6;XY)\leq C_6\\
&C_4+C_6-C_6+d_6-H(X)\geq I(E_6;Y|X)\\&~~~~\mbox{because }\{4,7\}\mbox{ is a cut between }s_1,s_2\\&~~~~\mbox{and }t_1\mbox{ in the original graph}\\
&C_5+C_6-C_6+d_6-H(Y)\geq I(E_6;X|Y)\\&~~~~\mbox{because }\{5,8\}\mbox{ is a cut between }s_1,s_2\\&~~~~\mbox{and }t_2\mbox{ in the original graph}\\
%&C_2\geq I(E_6;X|Y)\\&~~~~\mbox{because we showed that }\frac{1}{n}H(K_6|Y^n)\leq C_2
\end{align*}
for some $p(e_6|x,y)$. Here we used the fact that the capacities of edges 6, 7 and 8 are all the same, hence we can assume that they are all carrying the same message. Therefore we can compute the uncertainty of the message on edge 6 by looking at cuts that include edge 7 or 8.

The next step is to incorporate the inequality $\frac{1}{n}H(K_6|Y^n)\leq C_2$ with the above set of inequalities. Remember that $C_5+C_6-C_6+d_6-H(Y)$ in the third inequality above is an upper bound on $\frac{1}{n}H(K_6|Y^n)$. This comes from Lemma \ref{lemma1}. The term $I(E_6;X|Y)$ is a lower bound on $\frac{1}{n}H(K_6|Y^n)$. This comes from Theorem \ref{Thm2}. Now, using the inequality $\frac{1}{n}H(K_6|Y^n)\leq C_2$ we can conclude that $\min\big(C_2, C_5+C_6-C_6+d_6-H(Y)\big)$ is an upper bound on $\frac{1}{n}H(K_6|Y^n)$. Thus, we can write
\begin{align*}
&d_6=I(E_6;XY)\leq C_6\\
&C_4+C_6-C_6+d_6-H(X)\geq I(E_6;Y|X)\\&~~~~\mbox{because }\{4,7\}\mbox{ is a cut between }s_1,s_2\\&~~~~\mbox{and }t_1\mbox{ in the original graph}\\
&\min\big(C_2, C_5+C_6-C_6+d_6-H(Y)\big)\geq I(E_6;X|Y)\\&~~~~\mbox{because }\{5,8\}\mbox{ is a cut between }s_1,s_2\\&~~~~\mbox{and }t_2\mbox{ in the original graph}\\
%&C_2\geq I(E_6;X|Y)\\&~~~~\mbox{because we showed that }\frac{1}{n}H(K_6|Y^n)\leq C_2
\end{align*}
for some $p(e_6|x,y)$. This set of equations can be simplified in the following form
\begin{align}
&C_6\geq I(E_6;XY)\label{eqn:EdgeCutFirst}\\
&C_4\geq H(X|E_6)\label{eqn:EdgeCutSecond}\\
&C_5\geq H(Y|E_6)\\
&C_2\geq I(E_6;X|Y)\label{eqn:EdgeCutLast}
\end{align}
for some $p(e_6|x,y)$.
\subsubsection{Comparison of two converses}
We now compare the converse given by equations (\ref{eqn:converseeq-first}- \ref{eqn:converseeq-last}) with the converse given by equations (\ref{eqn:EdgeCutFirst}-\ref{eqn:EdgeCutLast}). The latter converse is derived in the appendix by looking at all cuts between the sources and the sinks (no edge-cuts here).

We claim that the minimum possible value of $C_6$ in this converse is less than or equal to $I(X;Y)$ if we restrict ourselves to networks where $C_2+C_4=H(X|Y)$. This is shown at the end of the appendix. Next consider the converse written using edge-cuts and given by equations (\ref{eqn:EdgeCutFirst}-\ref{eqn:EdgeCutLast}). We show that the minimum in the other converse is $\min_{X\rightarrow E\rightarrow Y}I(E;XY)$, i.e. Wyner's common information. From equations \ref{eqn:EdgeCutSecond} and \ref{eqn:EdgeCutLast} we have $C_2+C_4\geq H(X|E_6)+I(E_6;X|Y)=H(X|E_6)+H(X|Y)-H(X|E_6,Y)=H(X|Y)+I(X;Y|E_6)$. If we restrict ourselves to networks where $C_2+C_4=H(X|Y)$, it must be the case that random variables $X\rightarrow E_6\rightarrow Y$ form a Markov chain. Therefore the minimum of $C_6$ is $\min_{X\rightarrow E_6\rightarrow Y}I(E_6;X,Y)$ which is equal to Wyner's common information.

Noting that Wyner's common information is in general larger than $I(X;Y)$, we conclude that the later converse is strictly better than the former converse.

\section{Proofs}
\label{Section:Proofs}
\begin{proof}[Proof of Theorem \ref{Thm1}]:

\emph{Achievability}:
We begin by showing that each of the four set of points is a subset of $U(p)$. This would complete the proof noting that $U(p)$ is a convex set in $\mathbb{R}^4$ as it implies that the convex envelope of the union of the four sets of points is also a subset of $U(p)$. The details of $U(p)$ being a convex set are given in~\cite{FullProofs}. Note that if we can prove the inclusion for $c=0$ in each case, we will have it for all $c\geq 0$ since we can always add noise to $K$ that is independent of all previously defined random variables. Let us begin with the first set of points. Take some arbitrary $p(e|x,y)$. We would like to find a sequence of $p(k_n,x^n,y^n)$ such that
\begin{align*}
  & \lim_{n\rightarrow\infty}\frac{1}{n}H(K_n)=I(E;X,Y)\\
  &  \lim_{n\rightarrow\infty}\frac{1}{n}H(K_n|X^n)=I(E;Y|X)\\
   & \lim_{n\rightarrow\infty}\frac{1}{n}H(K_n|Y^n)=I(E;X|Y)\\
   & \lim_{n\rightarrow\infty}\frac{1}{n}H(K_n|X^n,Y^n)=0\\
\end{align*}
We use part 1 of Theorem 5 of \cite{NewShannon} which says that one can find a sequence of $p(k_n,x^n,y^n)$ such that
\begin{align*}
  & \lim_{n\rightarrow\infty}\frac{1}{n}I(X^n;Y^n|K_n)=I(X;Y|E)\\
  &  \lim_{n\rightarrow\infty}\frac{1}{n}H(K_n|X^n)=I(E;Y|X)\\
   & \lim_{n\rightarrow\infty}\frac{1}{n}H(K_n|Y^n)=I(E;X|Y)\\
   & \lim_{n\rightarrow\infty}\frac{1}{n}H(K_n|X^n,Y^n)=0\\
\end{align*}
The difference between these set of equations and the ones we would like to have is the first one. However these four set of equations are indeed equivalent. Note that
\begin{align*}
    H(K_n)=&H(K_n|X^n)+H(K_n|Y^n)\\&-H(K_n|X^n,Y^n)+I(X^n;Y^n)-I(X^n;Y^n|K_n).
\end{align*}
Thus,
\begin{align*}
    \lim_{n\rightarrow\infty}\frac{1}{n}H(K_n)&= \lim_{n\rightarrow\infty}\frac{1}{n}H(K_n|X^n)+\lim_{n\rightarrow\infty}\frac{1}{n}H(K_n|Y^n)\\&~~~-\lim_{n\rightarrow\infty}\frac{1}{n}H(K_n|X^n,Y^n)+I(X;Y)\\&~~~-\lim_{n\rightarrow\infty}\frac{1}{n}I(X^n;Y^n|K_n)\\&=
    I(E;Y|X)+I(E;X|Y)\\&~~~+I(X;Y)-I(X;Y|E)\\&=I(E;X,Y).
\end{align*}

We now prove that the second and the third sets of points is in $U(p)$. Slepian-Wolf tell us that for any $\epsilon$ one can find $N$ such that for any $n>N$ there are functions $M_{xn}:\mathcal{X}^n\mapsto [1:2^{n(H(X|Y)+\epsilon)}]$ and $M_{yn}:\mathcal{Y}^n\mapsto [1:2^{n(H(Y|X)+\epsilon)}]$ such that $X^n$ can be recovered from $(M_{xn}(X^n),Y^n)$, and $Y^n$ can be recovered from $(M_{yn}(Y^n),X^n)$ with probability $1-\epsilon$. One can prove that\footnote{For instance the first equation holds because $\frac{1}{n}I(M_{xn}(X^n);Y^n)=\frac{1}{n}(H(M_{xn}(X^n))+H(Y^n)-H(M_{xn}(X^n),Y^n))=
\frac{1}{n}(H(M_{xn}(X^n))+H(Y^n)-H(X^n,Y^n)+H(X^n|M_{xn}(X^n),Y^n))\leq H(X|Y)+\epsilon+H(Y)-H(X,Y)+h(\epsilon)+\epsilon|\mathcal{X}||\mathcal{Y}|$ by the Fano inequality and the fact that $M_{xn}$ is a function of $X^n$. The third equation holds because it is possible to reconstruct $(X^n, Y^n)$ from $M_{xn}(X^n)$ and $Y^n$ with high probability.}
\begin{align}
  & \frac{1}{n}I(M_{xn}(X^n);Y^n)\leq r_1(\epsilon),\label{eqn:P1}\\
  & \frac{1}{n}I(M_{yn}(Y^n);X^n)\leq r_2(\epsilon),\label{eqn:P2}\\
  & \frac{1}{n}H(M_{xn}(X^n))\geq H(X|Y)-r_3(\epsilon),\label{eqn:P3}\\
  & \frac{1}{n}H(M_{yn}(Y^n))\geq H(Y|X)-r_4(\epsilon).\label{eqn:P4}
\end{align}
for some functions $r_i$ such that $r_i(\epsilon)$ converges to zero as $\epsilon$ converges to zero. Setting $K_n=M_{yn}(Y^n)$ would give us the second set of points as $\epsilon\rightarrow 0$ and $n\rightarrow \infty$. To see this note that $\lim_{n\rightarrow\infty}\frac{1}{n}H(K_n)=H(Y|X)$ because of equation (\ref{eqn:P4}) and the fact that $M_{yn}$ is taking value in $[1:2^{n(H(Y|X)+\epsilon)}]$. Furthermore one can show that $\lim_{n\rightarrow\infty}\frac{1}{n}H(K_n|X^n)=H(Y|X)$ using equation (\ref{eqn:P2}). Similarly setting $K_n=M_{xn}(X^n)$ asymptotically gives us the third set of points.

We now prove that the fourth set of points is in $U(p)$. In order to define $K_n$ appropriately to get this set of points we are going to use random variables $M_{yn}$ and $M_{xn}$ defined above. For every $n\in \mathbb{N}$, we can find some $\epsilon_n$ such that equations \ref{eqn:P1}-\ref{eqn:P4} hold, and that $\epsilon_n$ converges to zero as $n$ converges to infinity. Next, take some arbitrary $0\leq f\leq \max(H(X|Y),H(Y|X))$. We would like to find a sequence of $p(k_n,x^n,y^n)$ such that
\begin{align*}
  & \lim_{n\rightarrow\infty}\frac{1}{n}H(K_n)=f\\
  &  \lim_{n\rightarrow\infty}\frac{1}{n}H(K_n|X^n)=\min(f,H(Y|X))\\
   & \lim_{n\rightarrow\infty}\frac{1}{n}H(K_n|Y^n)=\min(f,H(X|Y))\\
   & \lim_{n\rightarrow\infty}\frac{1}{n}H(K_n|X^n,Y^n)=0.\\
\end{align*}

Let us define the functions $M_{xn}\in [1:2^{n(H(X|Y)+\epsilon_n)}]$ and $M_{yn}\in [1:2^{n(H(Y|X)+\epsilon_n)}]$ as above. We can think of $M_{xn}(X^n)$ and $M_{yn}(Y^n)$ as two random binary sequences of length $\lfloor n(H(X|Y)+\epsilon_n)\rfloor$ and $\lfloor n(H(Y|X)+\epsilon_n)\rfloor$ respectively. Let us use the notation $M_{yn}^{i:j}(Y^n)$ to denote the set of $i^{th}$ to $j^{th}$ bits of $M_{yn}(Y^n)$. We use a similar notation for $M_{xn}(X^n)$.

Without loss of generality let us assume that $H(X|Y)\geq H(Y|X)$. Consider the following two cases:

\emph{Case 1.} $f\leq H(Y|X)$:

 In this case, we let $K_n$ be equal to the bitwise XOR of the first $\lfloor nf\rfloor$ bits of $M_{xn}(X^n)$ and $M_{yn}(Y^n)$, i.e. the bitwise XOR of $M_{xn}^{1:\lfloor nf\rfloor}(X^n)$ and $M_{yn}^{1:\lfloor nf\rfloor}(Y^n)$. Clearly $\frac{1}{n}H(K_n|X^n,Y^n)=0$. We would like to show that
 \begin{align*}
  & \lim_{n\rightarrow\infty}\frac{1}{n}H(K_n)=f,\\
  & \lim_{n\rightarrow\infty}\frac{1}{n}H(K_n|X^n)=f,\\
  & \lim_{n\rightarrow\infty}\frac{1}{n}H(K_n|Y^n)=f.\\
\end{align*}
It suffices to prove the last two inequalities since $H(K_n|X^n)\leq H(K_n)\leq \log|\mathcal{K}_n|\leq nf$. We prove the second one, the proof for the third is similar. Note that $H(K_n|X^n)=H(K_n|X^n,M_{xn}^{1:\lfloor nf\rfloor}(X^n))=H(M_{yn}^{1:\lfloor nf\rfloor}(Y^n)|X^n)$. Equation \ref{eqn:P2} implies that
 \begin{align*}
  & \frac{1}{n}I(M_{yn}^{1:\lfloor nf\rfloor}(Y^n);X^n)\leq r_2(\epsilon_n).
\end{align*}
Thus,
 \begin{align*}
  & \lim_{n\rightarrow\infty}\frac{1}{n}H(K_n|X^n)=\lim_{n\rightarrow\infty}\frac{1}{n}H(M_{yn}^{1:\lfloor nf\rfloor}(Y^n)).\\
\end{align*}
Clearly $\lim_{n\rightarrow\infty}\frac{1}{n}H(M_{yn}^{1:\lfloor nf\rfloor}(Y^n))\leq f$. If $\lim_{n\rightarrow\infty}\frac{1}{n}H(M_{yn}^{1:\lfloor nf\rfloor}(Y^n))<f$ then additionally considering the $\lfloor nf\rfloor+1$ to $\lfloor nH(Y|X)+n\epsilon_n\rfloor$ bits of $M_{yn}$ can at most increase the asymptotic entropy rate by $H(Y|X)-f$ bits. On the other hand equation \ref{eqn:P4} implies that $\lim_{n\rightarrow\infty}\frac{1}{n}H(M_{yn}(Y^n))=H(Y|X)$. This is a contradiction because using the fact that the joint entropy is less than or equal to the individual entropies one can write
\small \begin{align*}\lim_{n\rightarrow\infty}\frac{1}{n}H(M_{yn}(Y^n))&\leq \lim_{n\rightarrow\infty}\frac{1}{n}H(M_{yn}^{1:\lfloor nf\rfloor}(Y^n))\\&~~~+\lim_{n\rightarrow\infty}\frac{1}{n}H(M_{yn}^{\lfloor nf\rfloor+1:\lfloor nH(Y|X)+n\epsilon_n\rfloor}(Y^n))\\&<f+n-f=n.\end{align*}
\normalsize
\emph{Case 2.} $H(Y|X)\leq f\leq H(X|Y)$: In this case, let $K_n$ be equal to the bitwise XOR of $M_{xn}^{1:\lfloor nH(Y|X)\rfloor}(X^n)$ and $M_{yn}^{1:\lfloor nH(Y|X)\rfloor}(Y^n)$, together with $M_{xn}^{\lfloor nH(Y|X)\rfloor+1:\lfloor nf\rfloor}(X^n)$. In this case, one needs to show that
 \begin{align*}
  & \lim_{n\rightarrow\infty}\frac{1}{n}H(K_n)=f,\\
  & \lim_{n\rightarrow\infty}\frac{1}{n}H(K_n|X^n)=nH(Y|X),\\
  & \lim_{n\rightarrow\infty}\frac{1}{n}H(K_n|Y^n)=f.\\
\end{align*}
As in case 1, the third equation implies the first. The proof for the last two limits is similar to the one discussed above in case 1.

\emph{Converse}:
Since $U(p)$ is convex, to show that the region $U(p)$ is equal to the convex envelope of the given set of points, it suffices to show that for any real $\lambda_1$, ..., $\lambda_4$, the maximum of $\lambda_1 u_1+\lambda_2 u_2 + \lambda_3 u_3+\lambda_4 u_4$ over $U(p)$ is achieved at one of the given points. We show this by a case by case analysis. First assume that $\lambda_1+\lambda_2+\lambda_3+\lambda_4>0$. In this case maximum will be infinity and is achieved at the point $[c,c,c,c]$ when $c\rightarrow \infty$. If $\lambda_1+\lambda_2+\lambda_3+\lambda_4\leq 0$, we can write the maximum of $\lambda_1 u_1+\lambda_2 u_2 + \lambda_3 u_3+\lambda_4 u_4$ over $U(p)$ as
 \begin{align*}&\limsup_{n\rightarrow \infty}\frac{1}{n}\bigg(\lambda_1I(K;X^nY^n)+\lambda_2I(K;Y^n|X^n)+\\&\lambda_3I(K;X^n|Y^n)+(\lambda_1+\lambda_2+\lambda_3+\lambda_4)H(K|X^n,Y^n)\bigg).\end{align*}
The last term $(\lambda_1+\lambda_2+\lambda_3+\lambda_4)H(K|X^n,Y^n)$ is less than or equal to zero. Given any $(K,X^n,Y^n)$, we can always use part 1 of Theorem 5 of \cite{NewShannon} as in the achievability to find $(K',X^{nm}, Y^{nm})$ for some $m$ such that $K'$ is a function of $(X^{nm}, Y^{nm})$ and sum of the first three terms is asymptotically unchanged. $K'$ being a function of $(X^{nm}, Y^{nm})$ implies that $(\lambda_1+\lambda_2+\lambda_3+\lambda_4)H(K'|X^{nm},Y^{nm})$ is zero. To sum up, without loss of generality we can consider only random variables $K$ that are deterministic functions of $(X^n,Y^n)$, and furthermore we only need to compute the following expression over such random variables
 \begin{align*}&\limsup_{n\rightarrow \infty}\frac{1}{n}\bigg(\lambda_1I(K;X^nY^n)+\lambda_2I(K;Y^n|X^n)\\&+\lambda_3I(K;X^n|Y^n)\bigg).\end{align*}
We now continue by a case by case analysis:
\begin{itemize}
  \item $\lambda_1\geq 0,\lambda_2\geq 0,\lambda_3\geq 0$: Note that if we replace $K$ with $(K,X^n,Y^n)$ the expression will not decrease. Since $K$ is a function of $(X^n,Y^n)$, we conclude that $K=X^nY^n$ is the optimal choice in this instance. In this case the maximum of $\lambda_1 u_1+\lambda_2 u_2 + \lambda_3 u_3+\lambda_4 u_4$ over $U(p)$ will be equal to the maximum of the same expression over the first set of points with the choice of $E=XY$.
  \item $\lambda_1\geq 0,\lambda_2\leq 0,\lambda_3\geq 0$: If $\lambda_1+\lambda_2\geq 0$, the maximum of $\lambda_1 u_1+\lambda_2 u_2 + \lambda_3 u_3+\lambda_4 u_4$ over $U(p)$ will be equal to the maximum of the same expression over the first set of points with the choice of $E=XY$. To see this write $\lambda_2I(K;Y^n|X^n)$ as $\lambda_2I(K;Y^n,X^n)-\lambda_2I(K;X^n)$ and note that the expression is maximized when $K=X^nY^n$. If $\lambda_1+\lambda_2\leq 0$ first note that if we replace  $K$ with $(K,X^n)$ the expression will not decrease. In this case the expression $\lambda_1I(K,X^n;X^nY^n)+\lambda_2I(K,X^n;Y^n|X^n)+\lambda_3I(K,X^n;X^n|Y^n)$ will be equal to
      $\lambda_1H(X^n)+\lambda_3H(X^n|Y^n)+(\lambda_1+\lambda_2)I(K;Y^n|X^n)$. Since $\lambda_1+\lambda_2\leq 0$, we have $(\lambda_1+\lambda_2)I(K;Y^n|X^n)\leq 0$. Thus the maximum of $\lambda_1 u_1+\lambda_2 u_2 + \lambda_3 u_3+\lambda_4 u_4$ over $U(p)$ will be less than or equal to $\lambda_1H(X)+\lambda_3H(X|Y)$, which is equal to the maximum of the same expression over the first set of points with the choice of $E=X$.
  \item $\lambda_1\geq 0,\lambda_2\geq 0,\lambda_3\leq 0:$ This case is similar to case 2 by symmetry.
  \item $\lambda_1\geq 0, \lambda_2\leq 0,\lambda_3\leq 0:$ Take some arbitrary $n$ and $K=f(X^n,Y^n)$. Let the random index $J$ be uniformly distributed on $\{1,2,3,...,n\}$ and independent of $(K,X^n,Y^n)$. Define the auxiliary random variables $E=(K,X_{1:J-1},Y_{1:J-1},J), X=X_J, Y=Y_J$. Note that
      \begin{align*}I(K;X^n,Y^n)=&\sum_{j=1}^nI(K;X_j,Y_j|X_{1:j-1},Y_{1:j-1})\\&=\sum_{j=1}^nI(K,X_{1:j-1},Y_{1:j-1};X_j,Y_j)\\&=nI(E;X,Y),\\
      I(K;Y^n|X^n)=&\sum_{j=1}^nI(K;Y_j|X^n,Y_{1:j-1})\\&=\sum_{j=1}^nI(K,X_{1:j-1},X_{j+1:n},Y_{1:j-1};Y_j|X_j)\geq\\& \sum_{j=1}^nI(K,X_{1:j-1},Y_{1:j-1};Y_j|X_j)\\&=nI(E;Y|X)\end{align*}and similarly \begin{align*}
      I(K;X^n|Y^n)\geq nI(E;X|Y).\end{align*}
      Since $\lambda_2\leq 0,\lambda_3\leq 0$, we have $\lambda_2\frac{1}{n}I(K;Y^n|X^n)\leq I(E;Y|X)$ and $\lambda_3\frac{1}{n}I(K;X^n|Y^n)\leq I(E;X|Y)$. Therefore the maximum of $\lambda_1 u_1+\lambda_2 u_2 + \lambda_3 u_3+\lambda_4 u_4$ over $U(p)$ will be less than or equal to the maximum of the same expression over the first set of points.
  \item $\lambda_1\leq 0,\lambda_2\geq 0,\lambda_3\leq 0:$ If $\lambda_1+\lambda_2\geq 0$, we can write \begin{align*}&\lambda_1I(K;X^nY^n)+\lambda_2I(K;Y^n|X^n)+\lambda_3I(K;X^n|Y^n)\leq\\&
\lambda_1I(K;X^nY^n)+\lambda_2I(K;Y^n|X^n)=\\&
\lambda_1I(K;X^n)+(\lambda_1+\lambda_2)I(K;Y^n|X^n)\leq \\&
(\lambda_1+\lambda_2)I(K;Y^n|X^n)\leq (\lambda_1+\lambda_2)H(Y^n|X^n)
      \end{align*}
       Thus the maximum of $\lambda_1 u_1+\lambda_2 u_2 + \lambda_3 u_3+\lambda_4 u_4$ over $U(p)$ will be less than or equal to $(\lambda_1+\lambda_2)H(Y|X)$, which is equal to the maximum of the same expression over the second set of points.
       If $\lambda_1+\lambda_2\leq 0$, we can write
             \begin{align*}&\lambda_1I(K;X^nY^n)+\lambda_2I(K;Y^n|X^n)+\lambda_3I(K;X^n|Y^n)=\\&(\lambda_1+\lambda_2)I(K;X^nY^n)-\lambda_2I(K;X^n)+\lambda_3I(K;X^n|Y^n)\\&\leq 0.
              \end{align*}
       Thus the maximum of $\lambda_1 u_1+\lambda_2 u_2 + \lambda_3 u_3+\lambda_4 u_4$ over $U(p)$ will be zero.
   \item $\lambda_1\leq 0,\lambda_2\leq 0,\lambda_3\geq 0:$ This is similar to case 5.
   \item $\lambda_1\leq 0,\lambda_2\leq 0,\lambda_3\leq 0:$ This is similar to case 4.
   \item $\lambda_1\leq 0,\lambda_2\geq 0,\lambda_3\geq 0:$ If $\lambda_1+\lambda_2+\lambda_3\leq 0$
   \begin{align*}&
   \lambda_1I(K;X^nY^n)+\lambda_2I(K;Y^n|X^n)+\lambda_3I(K;X^n|Y^n)=\\&(\lambda_1+\lambda_2+\lambda_3)I(K;X^nY^n)-\lambda_2I(K;X^n)-\lambda_3I(K;Y^n)\\&\leq 0.
   \end{align*}
   Thus $K$ constant works here. If $\lambda_1+\lambda_2+\lambda_3\geq 0$ using Lemma \ref{lemma1E}
    \begin{eqnarray*}&
\lambda_1I(K;X^nY^n)+\lambda_2I(K;Y^n|X^n)\\&+\lambda_3I(K;X^n|Y^n)=\\&
(\lambda_1+\lambda_2+\lambda_3)I(K;X^nY^n)-\lambda_2I(K;X^n)\\&-\lambda_3I(K;Y^n)\leq\\&
(\lambda_1+\lambda_2+\lambda_3)I(K;X^nY^n)\\&-\lambda_2[I(K;X^nY^n)-H(Y^n|X^n)]_+\\&-\lambda_3[I(K;X^nY^n)-H(X^n|Y^n)]_+=\\&
n\bigg(\lambda_1\frac{I(K;X^nY^n)}{n}+\lambda_2\min(\frac{I(K;X^nY^n)}{n}, \\&H(Y|X))+\lambda_3\min(\frac{I(K;X^nY^n)}{n},H(X|Y))\bigg).
\end{eqnarray*}
Thus, the maximum of the original expression is less than or equal to
\begin{align*}
&\max_{0\leq t\leq H(X,Y)}\bigg(\lambda_1t+\lambda_2\min(t, H(Y|X))\\&+\lambda_3\min(t,H(X|Y))\bigg)=\\&\max_{0\leq t\leq \max(H(X|Y),H(Y|X))}\bigg(\lambda_1t+\lambda_2\min(t, H(Y|X))\\&+\lambda_3\min(t,H(X|Y))\bigg).
\end{align*}
Thus the maximum of $\lambda_1 u_1+\lambda_2 u_2 + \lambda_3 u_3+\lambda_4 u_4$ over $U(p)$ will be less than or equal to the maximum of the same expression over the fourth set of points.
\end{itemize}

\begin{lemma}\label{lemma1E} Given any three random variables $X,Y,K$ where $K$ is a function of $(X,Y)$, we have
$$I(K;X)\geq [H(K)-H(Y|X)]_+$$
$$I(K;Y)\geq [H(K)-H(X|Y)]_+$$
where $[x]_+$ is $0$ when $x$ is negative and $x$ when it is non-negative.
\end{lemma}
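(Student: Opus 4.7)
The plan is to prove the first inequality and appeal to symmetry for the second. Since $I(K;X) \geq 0$ holds trivially, it suffices to establish $I(K;X) \geq H(K) - H(Y|X)$, which is equivalent to $H(K|X) \leq H(Y|X)$ after using $I(K;X) = H(K) - H(K|X)$.

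The key observation is that conditioning on the event $X = x$ makes $K$ a deterministic function of $Y$ alone: because $K = f(X,Y)$ for some function $f$, we have $K = f(x, Y)$ once $X$ is fixed to $x$. A deterministic function cannot increase entropy, so $H(K \mid X = x) \leq H(Y \mid X = x)$ for every $x$ in the support of $X$. Averaging over $x$ gives $H(K|X) \leq H(Y|X)$, which is the needed bound.

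Combining, $I(K;X) = H(K) - H(K|X) \geq H(K) - H(Y|X)$, and together with the nonnegativity of mutual information this yields $I(K;X) \geq \max\bigl(0, H(K) - H(Y|X)\bigr) = [H(K) - H(Y|X)]_+$. The second inequality follows by the identical argument with the roles of $X$ and $Y$ interchanged, using that $K$ is also a function of $Y$ once $Y$ is conditioned on (so $H(K|Y) \leq H(X|Y)$).

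There is no real obstacle here; the entire content reduces to the monotonicity of entropy under deterministic maps applied pointwise after conditioning, so the proof is essentially a one-line computation once that observation is made.
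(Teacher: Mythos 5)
Your proof is correct and takes essentially the same route as the paper's: both reduce the claim to $H(K|X)\le H(Y|X)$, which the paper expresses as the equivalent $H(K,X)\le H(X,Y)$ and you justify by pointwise conditioning on $X=x$ --- the same observation stated two ways. One small slip in phrasing: for the second inequality you write that ``$K$ is also a function of $Y$ once $Y$ is conditioned on,'' but you mean a function of $X$; the conclusion $H(K|Y)\le H(X|Y)$ you draw from it is nevertheless correct.
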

\emph{Proof:} We prove the first equation. The proof for the second one is similar. It suffices to show that
$I(K;X)\geq H(K)-H(Y|X)$, which is equivalent with $H(Y,X)\geq H(K,X)$ and obviously true.

\end{proof}
\begin{proof}[Proof of Theorem \ref{Thm2}] Take some $n$ and $p(k|x^n,y^n)$ and consider the 4-tuples $(u_1,u_2,u_3,u_4)$
\begin{align*}
&u_1=\frac{1}{n}H(K)\\&u_2=\frac{1}{n}H(K|X^n)\\&u_3=\frac{1}{n}H(K|Y^n)\\&u_4=\frac{1}{n}H(K|X^n,Y^n)
\end{align*}
Let $c=\frac{1}{n}H(K|X^n,Y^n)$. Let the random index $J$ be uniformly distributed on $\{1,2,3,...,n\}$ and independent of $(K,X^n,Y^n)$. Define the auxiliary random variables $E=(K,X_{1:J-1},Y_{1:J-1},J), X=X_J, Y=Y_J$. One can then verify that
      \begin{align*}&I(K;X^n,Y^n)=nI(E;X,Y),\\
      &I(K;Y^n|X^n)\geq nI(E;Y|X)\\
      &I(K;X^n|Y^n)\geq nI(E;X|Y).\end{align*}
Thus, $u_1=c+I(E;X,Y)$, $u_2\geq c+I(E;Y|X)$ and $u_3\geq c+I(E;X|Y)$ for some $p(e|x,y)$.
\end{proof}

\section*{Acknowledgment}
The authors would like to thank Prof. Raymond Yeung for many insightful discussions, his comments on the manuscript and continued support throughout the research. This research was partially supported by a grant from the University Grants Committee of the Hong Kong Special Administrative Region, China (Project No. AoE/E-02/08).
\appendix
\subsection{Using the cuts to write a converse}
In this appendix we use cuts between sources and sinks to write a converse for the network of Figure \ref{fig:f3N}. Since there are two sources and two sinks in this network there are more types of cuts to consider. Every cut divides the nodes of the network into two sets $\mathcal{A}$ and $\mathcal{A}^c$. We use the notation $cut$\emph{(sources in $\mathcal{A}$; sources in $\mathcal{A}^c$; sinks in $\mathcal{A}^c$)} to denote the edges of such a cut. For instance in Figure \ref{fig:f3N}, $\{4,2\}$ is $cut(s_1; s_2; t_1,t_2)$ meaning that edges $4$ and $2$ are the edges of a cut that has $s_1$ in $\mathcal{A}$,  $s_2$ in $\mathcal{A}^c$ and sinks $t_1,t_2$ in $\mathcal{A}^c$. Suppose we want to write the converse for an edge $e$ in $cut$\emph{(sources in $\mathcal{A}$; sources in $\mathcal{A}^c$; sinks in $\mathcal{A}^c$)}. If there is no source in $\mathcal{A}^c$, then we can write a converse as discussed earlier in equations (\ref{eqn:edgeconverse1}-\ref{eqn:edgeconverse3}). However if there is a source in $\mathcal{A}^c$, say $s_2$, we need to use a modified version of Lemma \ref{lemma1} used to bound the entropy of the random variable on an edge of the cut conditioned on a source that is in $\mathcal{A}$. The inequality of the lemma is weakened by adding the joint entropy of all the sources in $\mathcal{A}^c$ to one side of the inequality as shown below.

\emph{Lemma \ref{lemma1} [revisited]:} Take an arbitrary cut containing $e$ from the first source to the first sink, and let $Cut_x$ denote the sum of the capacities of the edges on this cut. Further assume that $s_2$ is in $\mathcal{A}^c$. Then $\frac{1}{n}H(K|X^n)$ must satisfy the following inequalities:
$$\frac{1}{n}H(K|X^n)\leq Cut_x-C_e+d_e+H(Y)-H(X)+k(\epsilon)$$
$$\frac{1}{n}H(K|X^n,Y^n)\leq Cut_x-C_e+d_e+H(Y)-H(X,Y)+k(\epsilon)$$
for some functions $k(\epsilon)$ that converges to zero as $\epsilon$ converges to zero.

\begin{proof} Let $Q$ denote the collection of random variables passing over the edges of the cut (except $e$). Clearly $\frac{1}{n}H(Q)\leq Cut_x-C_e+m\epsilon$ where $m$ is the number of edges in the graph. Since $(Q,K)$ is the collection of the random variables passing the edges of the cut, $X^n$ should be recoverable from $(Q,K,Y^n)$ with probability of error less than or equal to $\epsilon$. Thus, by Fano's inequality $\frac{1}{n}H(X^n|Q,K,Y^n)\leq k_1(\epsilon)$ for some function $k_1(\epsilon)$ that converges to zero as $\epsilon$ converges to zero. We have
\begin{align*}\frac{1}{n}H(K|X^n)&\leq \frac{1}{n}H(Q,K,Y^n|X^n)\\&=\frac{1}{n}H(Q,K,Y^n,X^n)-\frac{1}{n}H(X^n)\\&\leq\frac{1}{n}H(Q)+\frac{1}{n}H(K)+H(Y)\\&~~~+\frac{1}{n}H(X^n|Q,K,Y^n)-H(X)\\&\leq Cut_x-C_e+H(Y)\\&~~~+m\epsilon+d_e-H(X)+k_1(\epsilon).\end{align*}
We get the first inequality by setting $k(\epsilon)=k_1(\epsilon)+m\epsilon$.
For the second inequality note that
\begin{align*}\frac{1}{n}H(K|X^n,Y^n)&\leq \frac{1}{n}H(Q,K|X^n,Y^n)\\&=\frac{1}{n}H(Q,K,Y^n,X^n)-\frac{1}{n}H(X^n,Y^n)\\&\leq\frac{1}{n}H(Q)+\frac{1}{n}H(K)+H(Y)\\&~~~+\frac{1}{n}H(X^n|Q,K,Y^n)-H(X,Y)\\&\leq Cut_x-C_e+H(Y)\\&~~~+m\epsilon+d_e-H(X,Y)+k_1(\epsilon).\end{align*}
\end{proof}

We can now write down the converse using the edge-cuts. We proceed in a similar fashion that we did in deriving equations (\ref{eqn:edgeconverse1}-\ref{eqn:edgeconverse3}) using Lemma \ref{lemma1} (revisited) and Theorem \ref{Thm2}. Lemma \ref{lemma1} (revisited) gives us upper bounds on the elements of the uncertainty vector, whereas Theorem \ref{Thm2} gives us lower bounds on these elements.

Cuts that have edge 2:
\begin{align*}
&d_2=I(E_2;XY)\leq C_2\\
&C_2+C_4-C_2+d_2+H(Y)-H(X)\geq I(E_2;Y|X)\\
&C_2+C_4-C_2+d_2+H(Y)-H(X,Y)\geq 0\\&~~~~\mbox{because }\{2,4\}\mbox{ is } cut(s_1; s_2; t_1, t_2)\\
&C_2+C_3+C_4+C_5-C_2+d_2-H(X,Y)\geq 0\\&~~~~\mbox{because }\{2,3,4,5\}\mbox{ is } cut(s_1, s_2;\emptyset; t_1, t_2)\\
\end{align*}
for some $p(e_2|x,y)$.

Cuts that have edge 3:
\begin{align*}
&d_3=I(E_3;XY)\leq C_3\\
&C_3+C_5-C_3+d_3+H(X)-H(Y)\geq I(E_2;X|Y)\\
&C_3+C_5-C_3+d_3+H(X)-H(X,Y)\geq 0\\&~~~~\mbox{because }\{3,5\}\mbox{ is } cut(s_2; s_1; t_1, t_2)\\
&C_2+C_3+C_4+C_5-C_3+d_3-H(X,Y)\geq 0\\&~~~~\mbox{because }\{2,3,4,5\}\mbox{ is } cut(s_1, s_2;\emptyset; t_1, t_2)\\
\end{align*}
for some $p(e_3|x,y)$.

Cuts that have edge 4:
\begin{align*}
&d_4=I(E_4;XY)\leq C_4\\
&C_2+C_4-C_4+d_4+H(Y)-H(X)\geq I(E_4;Y|X)\\
&C_2+C_4-C_4+d_4+H(Y)-H(X,Y)\geq 0\\&~~~~\mbox{because }\{2,4\}\mbox{ is } cut(s_1; s_2; t_1, t_2)\\
&C_2+C_3+C_4+C_5-C_4+d_4-H(X,Y)\geq 0\\&~~~~\mbox{because }\{2,3,4,5\}\mbox{ is } cut(s_1, s_2;\emptyset; t_1, t_2)\\
&C_4+C_5+C_6-C_4+d_4-H(X,Y)\geq 0\\&~~~~\mbox{because }\{4,5,6\}\mbox{ is } cut(s_1, s_2;\emptyset; t_1, t_2)\\
&C_4+C_7-C_4+d_4-H(X)\geq I(E_4;Y|X)\\&~~~~\mbox{because }\{4,7\}\mbox{ is } cut(s_1, s_2;\emptyset; t_1)\\
\end{align*}
for some $p(e_4|x,y)$.

Cuts that have edge 5:
\begin{align*}
&d_5=I(E_5;XY)\leq C_5\\
&C_3+C_5-C_5+d_5+H(X)-H(Y)\geq I(E_5;X|Y)\\
&C_3+C_5-C_5+d_5+H(X)-H(X,Y)\geq 0\\&~~~~\mbox{because }\{3,5\}\mbox{ is } cut(s_2; s_1; t_1, t_2)\\
&C_2+C_3+C_4+C_5-C_5+d_5-H(X,Y)\geq 0\\&~~~~\mbox{because }\{2,3,4,5\}\mbox{ is } cut(s_1, s_2;\emptyset; t_1, t_2)\\
&C_4+C_5+C_6-C_5+d_5-H(X,Y)\geq 0\\&~~~~\mbox{because }\{4,5,6\}\mbox{ is } cut(s_1, s_2;\emptyset; t_1, t_2)\\
&C_5+C_8-C_5+d_5-H(Y)\geq I(E_5;X|Y)\\&~~~~\mbox{because }\{5,8\}\mbox{ is } cut(s_1, s_2;\emptyset; t_2)\\
\end{align*}
for some $p(e_5|x,y)$.

Since the capacities of edges 6, 7 and 8 are all the same, we can assume that they are all carrying the same message. Therefore we can compute the uncertainty of the message on edge 6 by looking at cuts that include edge 7 or 8.
\begin{align*}
&d_6=I(E_6;XY)\leq C_6\\
&C_4+C_5+C_6-C_6+d_6-H(X,Y)\geq 0\\&~~~~\mbox{because }\{4,5,6\}\mbox{ is } cut(s_1, s_2;\emptyset; t_1, t_2)\\
&C_4+C_6-C_6+d_6+H(Y)-H(X)\geq I(E_6;Y|X)\\
&C_4+C_6-C_6+d_6+H(Y)-H(X,Y)\geq 0\\&~~~~\mbox{because }\{4,6\}\mbox{ is } cut(s_1; s_2; t_1, t_2)\\
&C_5+C_6-C_6+d_6+H(X)-H(Y)\geq I(E_6;X|Y)\\
&C_5+C_6-C_6+d_6+H(X)-H(X,Y)\geq 0\\&~~~~\mbox{because }\{5,6\}\mbox{ is } cut(s_2; s_1; t_1, t_2)\\
&C_4+C_6-C_6+d_6-H(X)\geq I(E_6;Y|X)\\&~~~~\mbox{because }\{4,7\}\mbox{ is } cut(s_1, s_2;\emptyset; t_1)\\
&C_5+C_6-C_6+d_6-H(Y)\geq I(E_6;X|Y)\\&~~~~\mbox{because }\{5,8\}\mbox{ is } cut(s_1, s_2;\emptyset; t_2)\\
&C_4+C_5+C_7+C_8-C_6+d_6-H(X,Y)\geq 0\\&~~~~\mbox{because }\{4,5,7,8\}\mbox{ is } cut(s_1, s_2;\emptyset; t_1, t_2)\\
\end{align*}
for some $p(e_6|x,y)$.
After simplification and removal of redundant equations and noting that $C_6=C_7=C_8$, these inequalities can be written as follows:
\begin{align}
&I(E_2;X,Y)\leq C_2\label{eqn:converseeq-first}\\
&C_4\geq H(X,Y|E_2)-H(Y)\\
&C_3+C_4+C_5\geq H(X,Y|E_2)\\
&~~~~~~~~\mbox{\emph{From equations for edge 2}} \nonumber\\
&I(E_3;X,Y)\leq C_3\\
&C_5\geq H(X,Y|E_3)-H(X)\\
&C_2+C_4+C_5\geq H(X,Y|E_3)\\
&~~~~~~~~\mbox{\emph{From equations for edge 3}} \nonumber\\
&I(E_4;X,Y)\leq C_4\\
&C_2\geq H(X,Y|E_4)-H(Y)\\
&C_2+C_3+C_5\geq H(X,Y|E_4)\\
&C_5+C_6\geq H(X,Y|E_4)\\
&C_6\geq H(X|E_4)\\
&~~~~~~~~\mbox{\emph{From equations for edge 4}} \nonumber\\
&I(E_5;X,Y)\leq C_5\\
&C_3\geq H(X,Y|E_5)-H(X)\\
&C_2+C_3+C_4\geq H(X,Y|E_5)\\
&C_4+C_6\geq H(X,Y|E_5)\\
&C_6\geq H(X|E_5)\\
&~~~~~~~~\mbox{\emph{From equations for edge 5}} \nonumber
&I(E_6;X,Y)\leq C_6\\
&C_4+C_5\geq H(X,Y|E_6)\\
&C_4\geq H(X|E_6)\\
&C_5\geq H(Y|E_6)\label{eqn:converseeq-last}
\\
&~~~~~~~~\mbox{\emph{From equations for edge 6}} \nonumber
\end{align}
for some $p(e_2,e_3,e_4,e_5,e_6|x,y)$.

We claim that the minimum possible value of $C_6$ in this converse is less than or equal to $I(X;Y)$ if we restrict ourselves to networks where $C_2+C_4=H(X|Y)$. This is because the choice of $C_2=0$, $C_3=H(Y)$, $C_4=H(X|Y)$, $C_5=H(X,Y)$ and $C_6=I(X;Y)$ is a valid point in this converse region. To see this take $E_6$ in a way that $E_6\rightarrow X\rightarrow Y$ forms a Markov chain, and furthermore $p(e_6|x)\sim p(y|x)$. Take $E_4$ in a way that $E_4\rightarrow X\rightarrow Y$ forms a Markov chain, and furthermore $I(E_4;X)=H(X|Y)$. Take $E_5=(X,Y)$, $E_3=Y$ and $E_2=constant$. To verify these equations, it is useful to note that since $C_5=H(X,Y)$ those equations involving $C_5$ will be automatically satisfied. Because $E_6\rightarrow X\rightarrow Y$ forms a Markov chain and $p(e_6|x)\sim p(y|x)$, we have $I(E_6;X,Y)=I(E_6;X)=I(Y;X)$.

\end{document}